\documentclass[a4paper,english,numberwithinsect]{lipics-v2021}

\hideLIPIcs
\nolinenumbers

\usepackage[utf8]{inputenc}
\usepackage{hyperref}
\usepackage{amsmath}
\usepackage{amsfonts}
\usepackage{amssymb}
\usepackage{amsthm}
\usepackage{graphicx}
\usepackage{subcaption}
\usepackage{todonotes}
\usepackage[inline]{enumitem}
\setlist[enumerate,1]{label={(\roman*)}}

\usepackage{xcolor}

\usepackage{xcolor}
\newcommand{\prooflink}[1]{\hypersetup{linkcolor=lipicsYellow}\hyperref[#1]{$\blacktriangledown$}}
\newcommand{\statlink}[1]
{\hypersetup{linkcolor=lipicsBulletGray}\hyperref[#1]{$\blacktriangle$}}

\newtheorem{openquestion}{Open Question}

\graphicspath{ {./figs/} }

\newcommand{\definitionBox}[1]{\smallskip
\noindent\fbox{\begin{minipage}{0.98\textwidth}
\smallskip

#1

\smallskip
\end{minipage}}

\smallskip}

\title{Bowties and Hourglasses: Intersections of Double-Wedges\\\texorpdfstring{\medskip}{}
\Large \textit{Or: Stabbing and Avoiding Line Segments}}

\titlerunning{Bowties and Hourglasses: Intersections of Double-Wedges}
\authorrunning{D. Bertschinger, H. Förster, F. Klute, I. Parada, P. Schnider and B. Vogtenhuber}

\author{Daniel Bertschinger}{ETH Zürich, Zürich, Switzerland}{daniel.bertschinger@inf.ethz.ch}{}{}
\author{Henry F\"orster}{TU Munich, Heilbronn, Germany $\cdot$ University of Tübingen, Tübingen, Germany}{henry.foerster@tum.de}{0000-0002-1441-4189}{partially supported by DFG
grant KA812-18/2}
\author{Fabian Klute}{Universitat Politècnica de Catalunya, Barcelona, Spain}{fabian.klute@upc.edu}{0000-0002-7791-3604}{partially supported by grant PID2023-150725NB-I00 funded by MICIU/AEI/10.13039\-/501100011033. }
\author{Irene Parada}{Universitat Politècnica de Catalunya, Barcelona, Spain}{irene.parada@upc.edu}{0000-0003-3147-0083}{Corresponding author. Serra H\'unter Fellow. Partially supported by grant PID2023-150725NB-I00 funded by MICIU/AEI/10.13039/501100011033}
\author{Patrick Schnider}{ETH Zürich, Zürich, Switzerland $\cdot$ University of Basel, Basel, Switzerland}{patrick.schnider@inf.ethz.ch}{0000-0002-2172-9285}{}
\author{Birgit Vogtenhuber}{TU Graz, Graz, Austria}{bvogt@ist.tugraz.at}{0000-0002-7166-4467}{partially supported by the Austrian Science Fund within the collaborative DACH project \emph{Arrangements and Drawings} as FWF project \mbox{I 3340-N35}}

\ccsdesc[500]{Mathematics of computing~Combinatorial algorithms}
\ccsdesc[500]{Theory of computation~Computational geometry}
\ccsdesc[100]{Mathematics of computing~Graph algorithms}

\acknowledgements{This work was initiated at the 6th DACH Workshop on
Arrangements and Drawings in Stels, Switzerland. We thank the organizers for making this workshop possible, especially in those times, and the other participants for fruitful discussions. We thank the anonymous reviewers for their comments, which helped us strengthen our results.}

\keywords{segment arrangements, intersection graphs, stabbing lines, line duality}

\begin{document}

\maketitle

\begin{abstract}
We study the common intersection of arrangements of double-wedges. We consider arrangements where double-wedges may be both bowties (which do not contain a vertical line) or hourglasses (which contain a vertical line), in contrast to earlier studies that focused on arrangements of only bowties. This generalization changes the setting drastically, in particular, with respect to all arguments involving the point-line duality. Namely, a point in the intersection of all double-wedges is equivalent to a line that stabs a set of segments $\mathcal{S}$ (corresponding to the bowties) while it avoids a different set of segments $\mathcal{A}$ (corresponding to the complement of the hourglasses).

We show that in this general setting, the intersection of $n$ double-wedges may consist of $\Omega(n^2)$ interior-disjoint regions. Further, we discuss Gallai-type results for arrangements of segments and anti-segments, and we provide algorithms for computing the intersection of such arrangements with worst-case optimal running time. Finally, we also prove that we can find a single intersection point in almost optimal running time, assuming that \textsc{3SUM} admits no truly subquadratic-time algorithm.

\end{abstract}

\section{Introduction}

Two non-parallel lines $\ell_1$ and $\ell_2$ subdivide the Euclidean plane into four wedges.
The union of two opposite wedges forms a so-called \emph{double-wedge} \cite{CGGMS, HMRS}.
More precisely, it is the closure of the symmetric difference of two half-planes delimited by $\ell_1$ and $\ell_2$. 
We distinguish the two different types of double-wedges, namely, those that do not contain any vertical line, which we call \emph{bowties},
and those that do contain a vertical line, which we call \emph{hourglasses}. 
If neither $\ell_1$ nor $\ell_2$ are vertical, 
they span an hourglass and a bowtie that are the closure of the complement of one another.
For the remainder of this work, we assume without loss of generality that no bounding line of any double-wedge is vertical or horizontal.

The standard \emph{projective duality} $T$ transforms the point $p = (p_x, p_y)$ to the non-vertical line $p^*: y = p_x \cdot x - p_y$ and vice versa. 
Bowties are thus the projective dual of non-vertical line segments in the Euclidean plane;~see Figures~\ref{fig:titlefigure} and~\ref{fig:bowtie}. 
Therefore, each point in the intersection of a family of $n$ {bowties} corresponds~to a stabbing line of the dual family of $n$ line segments. 
This duality is behind the classic algorithm for efficiently stabbing line segments, which effectively computes the intersection of $n$ bowties in time $\mathcal{O}(n \log n)$~\cite{DBLP:journals/bit/EdelsbrunnerMPRWW82}. 
Coming from this line of reasoning, that is, about the stabbing of line segments,
double-wedges are sometimes also defined as what we call bowties~\cite{deBerg, Edelsbrunner}. 

We study more general arrangements of double-wedges, containing both bowties and hourglasses. 
Hourglasses are the projective dual of \emph{anti-segments}, that is, a straight line minus a segment contained in this line; see Figure~\ref{fig:titlefigure} and~\ref{fig:hourglass}.  
Thus, the common intersection of a general double-wedge arrangement corresponds to 
\begin{enumerate*}
\item the stabbing lines of an arrangement of segments and anti-segments, or, equivalently, 
\item the lines that stab all segments corresponding to the bowties while avoiding the segments dual to the complement of each hourglass, see also Figure~\ref{fig:titlefigure}.
\end{enumerate*}

The remainder of this paper is structured as follows. First, we give formal definitions and  preliminary  observations in Section~\ref{sec:preliminaries}. Then, we prove that the intersection of double-wedge arrangements with both bowties and hourglasses may consist of $\Omega(n^2)$ interior-disjoint regions and discuss Gallai-type results for arrangements of segments and anti-segments in Section~\ref{sec:complexity}. In Section~\ref{sec:algorithms} we describe efficient algorithms for the computation of the intersection of a double-wedge arrangement. Finally, in Section~\ref{sec:3sum}, we show that identifying a single intersection point cannot be done much more efficiently than computing the full intersection unless 3SUM admits a truly subquadratic-time algorithm. We conclude  with open problems.

\section{Preliminaries}
\label{sec:preliminaries}
Let $h_1$ and $h_2$ be two half-planes in $\mathbb{R}^2$ that are bounded by non-parallel lines $\ell_1$ and $\ell_2$, respectively. 
We denote the double-wedge formed by the closure of the symmetric difference of $h_1$ and $h_2$ as $\langle h_1, h_2 \rangle$
and the intersection point of $\ell_1$ and $\ell_2$ as the \emph{origin} of $\langle h_1, h_2 \rangle$.

For a double-wedge $d=\langle h_1, h_2 \rangle$, let $\ell$ be the line with slope $a_\ell=(a_1+a_2)/2$ through the origin $o_d$ of $d$, where $a_1$ and $a_2$ are the slopes of the bounding lines $\ell_1$ of $h_1$ and $\ell_2$ of $h_2$, respectively. 
We refer to the two rays of $\ell_1$ and $\ell_2$ emerging from $o_d$ that are located above $\ell$ as the \emph{upper trace} of $d$ and to the other two rays 
of $\ell_1$ and $\ell_2$ as the \emph{lower trace} of $d$; see Figure~\ref{fig:shapes}. 
\begin{figure}[b]
\centering
\includegraphics[scale = 0.68, page =3]{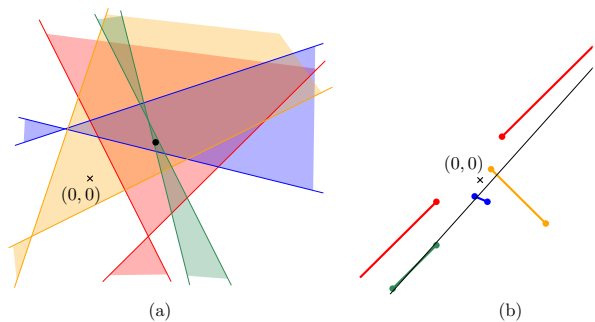}
\caption{(a) An arrangement $\mathcal{D}$ of double-wedges and (b) its projective dual arrangement of segments and anti-segments $\mathcal{A}$. A point contained in all of $\mathcal D$ is dual to a line stabbing all of $\mathcal{A}$.}
\label{fig:titlefigure}
\end{figure}
\begin{figure}[t]
\centering
\begin{subfigure}{0.45\textwidth}
\centering
\includegraphics[scale=0.8,page=1]{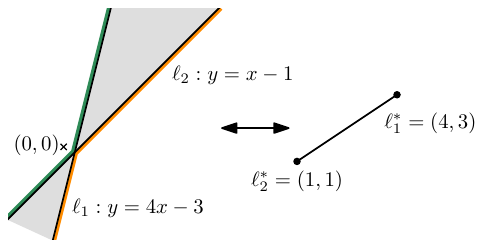}
\centering
\subcaption{}
\label{fig:bowtie}
\end{subfigure}
\hfill
\begin{subfigure}{0.45\textwidth}
\centering
\includegraphics[scale=0.8,page=2]{bowtieAndHourglass}
\centering
\subcaption{}
\label{fig:hourglass}
\end{subfigure}
\caption{(a) A bowtie and its dual line segment. (b)~An hourglass and its dual anti-segment. 
Upper traces are colored green and lower traces orange.}
\label{fig:shapes}
\end{figure}

We say that a point $x \in \mathbb{R}^2$ is \emph{below} the upper trace if and only if a vertical ray emerging from $x$ in positive $y$-direction hits the upper trace. 
Similarly, a point is \emph{above} the lower trace if and only if a vertical ray emerging from $x$ in negative 
$y$-direction hits the lower trace. 
If a point is not below the upper trace (or above the lower trace) we say that the point is above it (or below it, respectively).

\begin{observation}\label{obs:intersectionCharacterization}
Point $p \in \mathbb{R}^2$ is part of the intersection of a set $\mathcal{D}$ of  double-wedges if and only if 
\begin{enumerate*}
\item\label{cond:2} $p$ is below the upper trace but above the lower trace of each bowtie in $\mathcal{D}$, and
\item\label{cond:1} $p$ is above the upper trace or below the lower trace for each hourglass in $\mathcal{D}$.
\end{enumerate*}
\end{observation}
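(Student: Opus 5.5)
Since a point lies in the intersection of $\mathcal{D}$ exactly when it lies in every double-wedge, it suffices to show, for a single double-wedge $d=\langle h_1,h_2\rangle$, that $p\in d$ holds if and only if condition (i) holds when $d$ is a bowtie and condition (ii) holds when $d$ is an hourglass. The plan is to fix $p$, look at the vertical line $v$ through $p$, and describe $v\cap d$ explicitly using the upper and lower traces.

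First, place the origin $o_d$ at the point $(0,0)$ and write the bounding lines as $y=a_1x$ and $y=a_2x$ with $a_1<a_2$. Recall that the auxiliary line $\ell$ is $y=\frac{a_1+a_2}{2}x$.
\begin{itemize}
\item For $x>0$, the upper trace is the ray of slope $a_2$ and the lower trace is the ray of slope $a_1$.
\item For $x<0$ it is the reverse: the upper trace has slope $a_1$ and the lower trace has slope $a_2$.
\end{itemize}
Hence every vertical line $v$ with $x\neq 0$ meets the upper trace exactly once, at height $u(x)=\max(a_1x,a_2x)$, and meets the lower trace exactly once, at height $l(x)=\min(a_1x,a_2x)$, with $l(x)<u(x)$. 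It follows that $p=(x,y)$ is below the upper trace if and only if $y\le u(x)$, and above the lower trace if and only if $y\ge l(x)$.

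Next, determine $v\cap d$ from the four wedges cut out by $\ell_1$ and $\ell_2$. On $v$, the two lines split $v$ into three pieces: the bounded segment $[l(x),u(x)]$ and the two unbounded rays. The double-wedge $d$ consists of two opposite wedges, so it covers either the middle segment on both sides of $x=0$, or the two rays on both sides. Which of these occurs is the distinction between the types.
\begin{itemize}
\item If $d$ is a bowtie, it contains no vertical line, so its two wedges are the ones containing neither vertical direction. Then $v\cap d=[l(x),u(x)]$, which is exactly condition (i).
\item If $d$ is an hourglass, it contains the vertical line through $o_d$, so its wedges contain the upward and downward vertical rays. Then $v\cap d$ consists of the two closed rays $y\ge u(x)$ and $y\le l(x)$. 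Since the traces are closed, the boundary cases line up with the convention that "not below" means "above"; this is exactly condition (ii).
\end{itemize}

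The main obstacle is the degenerate vertical line $x=0$ through the origin, together with the boundary conventions. On this line both traces pass through $o_d$.
\begin{itemize}
\item For a bowtie, $d\cap v=\{o_d\}$. By the ray-hitting definition, a point is below the upper trace and above the lower trace only if it is $o_d$ itself, since the upward ray from $p$ must hit a trace.
\item For an hourglass, $d\cap v$ is all of $v$. Every point of $v$ is either at height $\ge 0$ and so "above" the upper trace (its upward ray misses the trace except at $o_d$), or at height $\le 0$ and so "below" the lower trace. The condition therefore holds everywhere on $v$.
\end{itemize}
The care needed is in reading the "hits" definitions in this case, and in checking that the assumption that no bounding line is vertical guarantees each trace is met exactly once away from $x=0$.
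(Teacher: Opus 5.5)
The paper records this as an Observation and gives no proof, so your vertical-line computation is the verification the authors leave implicit. For points off $\ell_1\cup\ell_2$ it is correct: the formulas $u(x)=\max(a_1x,a_2x)$ and $l(x)=\min(a_1x,a_2x)$ are right, the bowtie is the left/right pair of wedges, and the hourglass is the top/bottom pair.

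\textbf{The gap.} What fails is your assertion that the boundary cases ``line up''. ``Above'' is \emph{defined} as ``not below'', and ``below the lower trace'' as ``not above'' it. Hence condition (ii) is the exact logical negation of condition (i) for the same two bounding lines.

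\emph{Case $x\neq 0$.} With your own reading, $p$ is below the upper trace iff $y\le u(x)$. So condition (ii) says $y>u(x)$ or $y<l(x)$. The closed hourglass, however, meets $v$ in the closed rays $y\ge u(x)$ and $y\le l(x)$.

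\emph{Case $x=0$.} Here you mix two incompatible conventions. For the bowtie you use that the upward ray from $o_d$ hits the upper trace at $o_d$. Then $o_d$ is below the upper trace and above the lower one, so it violates (ii). Yet in the hourglass case you count $o_d$ as satisfying (ii). Switching to open rays only moves the defect to the bowtie side.

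\emph{No reading can work.} Conditions (i) and (ii) are complementary, so their solution sets would partition the plane. But the closed bowtie and the closed hourglass on the same two lines cover the plane and overlap along $\ell_1\cup\ell_2$. So the statement, read literally with closed double-wedges of both types, is false on the bounding lines, and no argument can establish it exactly.

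\textbf{What you can prove.} With closed rays, (i) characterizes the closed bowtie and (ii) characterizes the hourglass minus its bounding lines. With open rays, (i) characterizes the open bowtie and (ii) the closed hourglass. This is precisely the looseness the paper absorbs with its remark that all results hold whether or not the bounding lines belong to the double-wedges. You should state your conclusion in that form rather than claiming an exact equivalence.
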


We are also interested in the computational complexity of the following decision problem:

\definitionBox{
\textsc{Double-Wedge Intersection}

\textbf{Input.} An arrangement $\mathcal{D}$ of $n$ double-wedges.

\textbf{Question.} Is there a point $p \in \mathbb{R}^2$ such that $x \in d$ for all $d \in \mathcal{D}$?
}

\textsc{Double-Wedge Intersection} can be solved by computing the full intersection of the double-wedges in $\mathcal{D}$ but a more efficient solution for this decision problem may exist.

By dualizing \textsc{Double-Wedge Intersection} we can define the following equivalent problem:

\definitionBox{
\textsc{Stabbing and Avoiding Segments}

\textbf{Input.} Two sets $\mathcal{S}$ and $\mathcal{A}$ of in total $n$ segments.

\textbf{Question.} Is there a straight line $\ell \subset \mathbb{R}^2$ so that $\forall s \in \mathcal{S}: \ell \cap s \neq \emptyset$  and $\forall a \in \mathcal{A}: \ell \cap a = \emptyset$?
}

Our results 
apply (or can be directly adapted) 
whether 
the endpoints of an (anti-)segment can be stabbed or must be avoided. In the double-wedge arrangement, this corresponds to the cases where the bounding lines are part of the double-wedge or they are not, respectively.

\section{Combinatorial Properties of Double-Wedge Arrangements}
\label{sec:complexity}

\subsection{Combinatorial complexity}

\begin{theorem}\label{thm:complexityOfIntersection}
For every $k \in \mathbb{N}$, there exists a double-wedge arrangement $\mathcal{D}_n$ consisting of $n=2k$ double-wedges so that its intersection consists of $\left(n/2+1 \right)^2$ interior-disjoint regions.
\end{theorem}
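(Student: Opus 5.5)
We build $\mathcal{D}_n$ from $k$ bowties and $k$ hourglasses, arranged as a $(k+1)\times(k+1)$ grid. The idea is that the bowties cut out a thin horizontal strip, and inside it a family of near-vertical hourglass complements cuts the strip into $k+1$ pieces; we then repeat this in the other direction.

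A single double-wedge does not directly give a strip, so we work with near-degenerate shapes. An hourglass whose two bounding lines are almost parallel and nearly vertical, with origin far away, has a complement that locally looks like a thin near-vertical slab. Symmetrically, a bowtie with nearly parallel, nearly horizontal bounding lines is locally a thin near-horizontal slab. Intersecting $k$ such bowties alone only gives a slab, so instead we use the bowtie complement. Concretely, I would fix a bounded window $W$, say a square, and choose the following.

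\begin{enumerate}
\item Hourglasses $H_1,\dots,H_k$ whose excluded regions (closures of complements, which are bowties) restricted to $W$ are $k$ disjoint thin near-vertical bands. These split $W$ into $k+1$ vertical columns.
\item Bowties $B_1,\dots,B_k$ that, restricted to $W$, each cover all of $W$ except a thin near-horizontal band. This is possible because a bowtie contains the two opposite wedges, and with origin inside $W$ and an angle close to $\pi$, its complement in $W$ is a thin hourglass-shaped band around a near-horizontal line through the origin.
\end{enumerate}

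This forces a choice. The $k$ excluded horizontal bands should split $W$ into $k+1$ rows, but the bowtie bands are thin double-wedges that widen away from the origin. To keep them disjoint inside $W$, place the origins far outside $W$, for example far to the left, at distinct heights. Then within $W$ each excluded wedge is a narrow near-horizontal band, and these bands are pairwise disjoint. The same is done for the hourglasses, with origins far above $W$. Inside $W$ the intersection is then $W$ minus $k$ disjoint horizontal bands and minus $k$ disjoint vertical bands, which gives $(k+1)^2=(n/2+1)^2$ cells.

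The main obstacle is controlling what happens outside $W$. Each double-wedge is unbounded, so the cells must not merge via the exterior; they need to be interior-disjoint regions, that is, distinct connected components of the intersection. Making the outer cells unbounded is acceptable as long as the components stay separate. I would therefore verify that each excluded band extends to infinity in both directions without gaps. Here Observation~\ref{obs:intersectionCharacterization} helps: the complement of a double-wedge is the opposite double-wedge, a connected set containing the full origin-neighbourhood. Hence each excluded region is a connected unbounded "X"-free double-wedge. The claim to prove is that the union of the excluded regions separates the plane into exactly $(k+1)^2$ components.

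If this breaks down because the two wedges of a single excluded double-wedge diverge on the far side, I would place all origins on a common far line, so that the excluded regions behave like a pseudoline-type grid. Two families of $k$ pairwise non-crossing bi-infinite "thick curves", with every curve of one family crossing every curve of the other exactly once, separate the plane into $(k+1)^2$ faces by Euler's formula. The final check is that the extra wedge halves, which lie beyond each origin, do not create additional crossings within a family, and placing the origins far away ensures this.
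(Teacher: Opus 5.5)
\textbf{Item 2 is impossible, and the construction fails as written.} A bowtie with origin $o$ and bounding slopes $a_1<a_2$ is exactly the set of points $(x,y)$ for which $y-o_y$ lies between $a_1(x-o_x)$ and $a_2(x-o_x)$. So it meets every vertical line in a single closed interval. Dually, the lines of a fixed slope that stab a segment form an interval of intercepts. Hence any intersection of bowties meets every vertical line in an interval. No family of bowties, with origins inside $W$ or far to the left, can remove an interior horizontal band from $W$, let alone cut it into $k+1$ stacked rows. Bowties can only separate regions from left to right.

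Your justification has the orientation reversed. The complement of a bowtie is an hourglass, so it contains the vertical line through the origin. A bowtie with opening angle close to $\pi$ has both bounding lines nearly vertical, so its complement is a thin band around a near-\emph{vertical} line. (Item 1 is fine, using slightly tilted thin bowtie complements.) Since the theorem does not prescribe the types, you can repair this by swapping roles: let bowties exclude thin near-vertical hourglasses and hourglasses exclude thin near-horizontal bowties. That is the paper's vertical grating of bowties combined with a rotated, stretched horizontal grating of hourglasses.

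\textbf{The global step you flag as the main obstacle is also not handled.} The complement of a closed double-wedge is an open double-wedge, that is, two disjoint open wedges not containing the origin. It is neither connected nor a neighbourhood of the origin. Moving origins far away does not prevent interactions. Two open double-wedges always intersect, so excluded regions of one family always meet somewhere outside $W$. If the members are not parallel, they can cross and create extra regions. For example, two near-horizontal bands with opposite tilts, disjoint in $W$, cross to the right of $W$ and produce a fourth region. So your count could exceed $(k+1)^2$, and the thick-pseudoline/Euler picture does not describe these sets.

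\textbf{The missing idea is convexity.} Take each family to consist of translates of a single double-wedge, with origins on a common line whose direction lies inside the double-wedge. Write each member as the union of its two closed wedges. The intersection of the family is then the union, over all choices of one wedge per member, of convex sets. For such translates only the $k+1$ monotone choices are nonempty: the quadrilaterals between consecutive origins and the two unbounded wedges.

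The full intersection is then the union of the pairwise intersections of a piece from one grating with a piece from the other. Each such intersection is convex, so there are at most $(k+1)^2$ regions. Stretching and translating so that every pair meets with nonempty interior gives exactly $(k+1)^2$. This is the paper's argument, and it needs no window $W$ and no analysis at infinity.

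Note also that you do not need distinct connected components. In these gratings adjacent pieces meet at the origins, and the statement only asks for interior-disjoint regions.
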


\begin{proof}
We first take $k$ bowties whose origins are located on a common horizontal line and whose bounding lines are parallel so that they create a \emph{vertical grating} as shown in Figure~\ref{fig:lowerBound}a (for $k=6)$. The intersection of the vertical grating consists of a $4$-gon between each pair of consecutive (along the $x$-axis) double-wedges plus the two unbounded regions at the left and right boundary; that is, it consists of $k+1$ interior-disjoint regions.

\begin{figure}[t]
\centering
\includegraphics[scale=0.5,page=5]{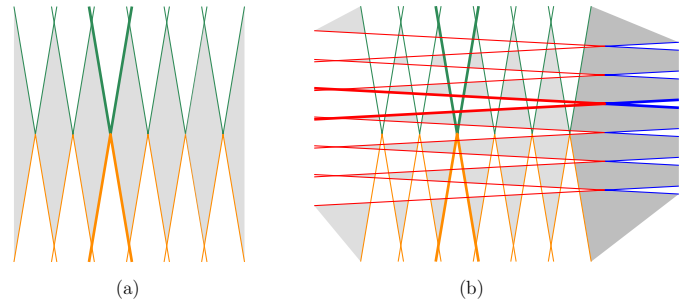}
\caption{Proof of Theorem~\ref{thm:complexityOfIntersection}. The traces of one bowtie in (a) and one bowtie and one hourglass in  (b) are drawn bold. Traces of the horizontal grating are colored red and blue, respectively.}
\label{fig:lowerBound}
\end{figure}

The remaining $k$ double-wedges are a copy of the vertical grating, rotated by $\pi/2$, yielding a \emph{horizontal grating} of hourglasses. 
The entire arrangement $\mathcal{D}_n$ consists of both a vertical and a suitably stretched and translated horizontal grating so that each region of the intersection of the vertical grating intersects each corresponding region of the horizontal one; see Figure~\ref{fig:lowerBound}b (for $k=6$). Thus, the intersection of $\mathcal{D}_n$ 
consists of $(k+1)^2$ interior-disjoint regions.
\end{proof}

Slightly wiggling our construction yields double-wedge arrangements in general position, while a rotation of our construction by $\pi/4$ yields an arrangement that only consists of hourglasses.
In contrast to the intersection of general double-wedge arrangements discussed here,
the intersection of a pure bowtie arrangement  consists only of $\mathcal{O}(n)$ regions~\cite{DBLP:journals/bit/EdelsbrunnerMPRWW82}. 
\medskip

\subsection{Gallai-type results}

A classical result in discrete geometry is \emph{Helly's theorem}. A generalization are \emph{Gallai-type} problems that ask if, given a family $\cal F$ of sets, there exist $k,p\in\mathbb{N}$ such that if any $k$ sets of $\cal F$ have a point in common, then there are $p$ points piercing all the sets in $\cal F$. In the case where the sets are bowties, this problem is well-studied in its dual version: \emph{Given a family of line segments in the plane such that any $k$ of them can be pierced by a line, are there $p$ lines that pierce the entire family?}

In 1935, Vincensini asked whether there is a $k$ for which one can pierce all segments with a single line. A counterexample was found shortly after by Santal\'{o}, showing that no such $k$ can exist~\cite[Chapter 4]{handbook}. On the other hand, Eckhoff showed that for a family of any compact convex sets in the plane, if any four of them can be pierced by a line, then there are two lines piercing the entire family~\cite{Eckhoff1973TransversalenproblemeID}. He also showed that if any three of them can be pierced by a line, then there are four lines that pierce the entire family \cite{DBLP:journals/dcg/Eckhoff93}. He further conjectured that in this case three lines should suffice; it was proven to be the case recently by McGinnis and Zerbib \cite{zerbib}.

In our setting of  segments and anti-segments, we observe that we need at most one more line—a line at infinity that pierces all anti-segments.

\begin{observation}
    Let $\mathcal{F}$ be a finite family of segments and anti-segments in the plane. The following statements are true:
    \begin{enumerate}
        \item If any three elements of $\mathcal{F}$ can be pierced by a line, then $\mathcal{F}$ can be pierced by four lines.
        \item If any four elements of $\mathcal{F}$ can be pierced by a line, then $\mathcal{F}$ can be pierced by three lines.
    \end{enumerate}  
\end{observation}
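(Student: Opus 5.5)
The plan is to split $\mathcal{F}$ into its segments $\mathcal{F}_S$ and its anti-segments $\mathcal{F}_A$. Each part will be handled separately: the segments by the known Gallai-type results for compact convex sets, and the anti-segments by one extra line. The idea is to spend one line on the anti-segments and use the classical results on the segments.

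First, the anti-segments. Each anti-segment $a$ is a line $\ell_a$ minus a bounded segment, so it is unbounded in both directions. Take a line $L$ that is not parallel to any $\ell_a$ and lies far enough away that its intersection with each $\ell_a$ avoids the removed bounded segment of $a$. Since $\mathcal{F}$ is finite, all removed segments lie in a bounded disk $B$. A line $L$ that misses $B$ and is not parallel to any of the finitely many lines $\ell_a$ then meets every $\ell_a$ outside $B$, and hence meets every anti-segment. Such an $L$ always exists: choose a direction different from all slopes of the $\ell_a$ and translate far away. This is the ``line at infinity'' mentioned in the text, realized concretely. One point of care: if the anti-segments are open or closed at their endpoints, this does not matter, since $L$ avoids $B$ entirely.

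Second, the segments. Every segment is a compact convex set. The hypothesis that any three (respectively any four) elements of $\mathcal{F}$ can be pierced by a line holds in particular for subfamilies of $\mathcal{F}_S$. If $|\mathcal{F}_S|$ is smaller than three (respectively four), a subfamily of that size still exists unless $\mathcal{F}_S$ itself is small; in that case $\mathcal{F}_S$ can trivially be pierced by one line, by applying the hypothesis to $\mathcal{F}_S$ padded with arbitrary other elements, or by taking a line through one point of each segment if there are at most two. For (i), McGinnis and Zerbib~\cite{zerbib} give three lines piercing $\mathcal{F}_S$. For (ii), Eckhoff~\cite{Eckhoff1973TransversalenproblemeID} gives two lines piercing $\mathcal{F}_S$. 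Adding $L$ yields four lines in case (i) and three lines in case (ii).

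The only step needing any thought is the existence of $L$, which is the main (mild) obstacle. There are also degenerate edge cases, such as $\mathcal{F}_S$ or $\mathcal{F}_A$ being empty or small. These are handled as above: if $\mathcal{F}_S$ is empty, $L$ alone suffices, and if $\mathcal{F}_A$ is empty, $L$ is simply not needed. In either case the stated bounds hold.
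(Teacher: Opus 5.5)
Your proof is correct and follows the paper's route exactly: you pierce the segments alone using McGinnis--Zerbib for (i) and Eckhoff for (ii), then add one extra line that pierces every anti-segment. Your far-away line, which misses a disk containing all removed segments and is parallel to none of the anti-segments' lines, is a concrete Euclidean realization of the paper's ``line at infinity''.
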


In contrast, we can show the following lower bound for $p$ given $k=3$:

\begin{theorem}
\label{thm:3lines_necessary}
    There is a family $\mathcal{F}$ of segments and anti-segments in the plane, any three of which can be pierced by a line, where three lines are necessary to pierce $\mathcal{F}$.
\end{theorem}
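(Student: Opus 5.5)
The plan is to give an explicit, rigid configuration. In it, the anti-segments cut the set of admissible lines down to a few narrow ``corridors'', and the segments are placed so that every corridor misses some segment, while any three elements still have a common transversal. First, a remark on why segments are needed at all. A line $x\cos\theta+y\sin\theta=R$ with $R$ huge and $\theta$ generic crosses every supporting line far outside its removed part, so a family of anti-segments alone is always pierced by one line. Hence the segments must force the piercing lines to stay in a bounded region, and the anti-segments must then forbid the ``good'' lines there. Note also that a line pierces an anti-segment $a$ with supporting line $L_a$ and removed segment $r_a$ exactly when it equals $L_a$, or is not parallel to $L_a$ and meets $L_a$ outside $r_a$.

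For the concrete choice I would start from an equilateral triangle with side lines $L_1,L_2,L_3$.
\begin{itemize}
\item On each $L_i$ I place two short segments $s_i,s_i'$ near the midpoint of that side, slightly apart.
\item Through the centre I place three anti-segments $a_1,a_2,a_3$ whose supporting lines are the three medians. Their removed parts are long, so that they extend well beyond the triangle.
\end{itemize}
A line piercing $a_j$ must then either be the median itself or cross that median far from the triangle, that is, be nearly parallel to it or lie far away. The segments keep the line near the triangle, so the lines that can pierce many elements are essentially the side lines $L_i$ and lines close to the medians. I would tune the lengths so that each $L_i$ pierces $s_i,s_i'$ and all three anti-segments, but no other segment. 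Each median-like line should pierce its own anti-segment and a few segments, but not the anti-segments of the other medians, because it crosses them inside their removed parts.

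The key steps, in order, are as follows.
\begin{enumerate}
\item Fix coordinates and list the candidate transversal directions.
\item Verify the $3$-piercing property by a case analysis over the types of triples: three segments, two segments and one anti-segment, one segment and two anti-segments, and three anti-segments. Symmetry under the rotation by $2\pi/3$ cuts the number of cases down.
\item Prove that two lines never suffice. By pigeonhole, some line must pierce elements attached to at least two different sides, say $s_1$ and $s_2$. Such a line passes through the small region near two side midpoints, so it crosses the median through the third vertex close to the centre, hence inside the removed part of that median's anti-segment. It therefore misses that anti-segment, and also, by the placement, at least one of $s_3,s_3'$. The second line must then pick up everything left over, namely elements on all three sides together with some anti-segments, and the same obstruction applies to it.
\end{enumerate}

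I expect the main obstacle to be the tension between the two requirements. Making every triple pierceable tends to create lines that pierce many elements, while forcing three lines needs those lines to be blocked. Concretely, triples such as $\{s_1,s_2,s_3\}$ need a transversal, and the danger is that a line through three near-midpoint segments also pierces all the anti-segments, which would collapse the bound. If the triangle construction fails here, the first fallback is to replace the segments by Santal\'o-type configurations of segments, for which any three are stabbable. The anti-segments would then destroy the few lines that cover more than half of them, and the ``no two lines'' step would be checked by an exhaustive combinatorial argument on the order types of the segment endpoints. Such a check could also be verified mechanically on explicit coordinates.
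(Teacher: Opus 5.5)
\textbf{There is a genuine gap: the triangle construction violates the 3-piercing property, and the fallback is not a construction.}

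Since $s_i$ and $s_i'$ are disjoint subsegments of the same line $L_i$, the only line piercing both is $L_i$ itself. But $L_i$ meets $L_j$ ($j\neq i$) at a vertex of the triangle, far from the midpoint near which $s_j$ sits. So the triple $\{s_1,s_1',s_2\}$ has no common transversal.

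The triple $\{s_1,s_2,s_3\}$ has none either. A transversal would pass through three points, each within a short segment's length of a side midpoint. These midpoints span the medial triangle, so for short segments no such three points are collinear. The danger you anticipate in your last paragraph, a line through three near-midpoint segments piercing too many anti-segments, therefore never arises; the real problem is that no such line exists.

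Your claim that $L_i$ pierces all three median anti-segments also contradicts your own choice of removed parts. $L_i$ meets $m_i$ at the midpoint of side $i$, and meets the other two medians at the two vertices on $L_i$. All three points lie in the closed triangle, hence inside removed parts that you let extend well beyond the triangle. So $L_i$ misses every $a_j$, and even $\{s_i,s_i',a_i\}$ cannot be pierced.

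In your configuration three lines are indeed necessary. But this holds only because no line pierces more than two of the six segments, and that is exactly what destroys the 3-piercing property. The fallback, Santal\'o-type segments plus unspecified anti-segments with the two-line obstruction ``checked mechanically'', names no family and verifies nothing. As written, the proposal does not prove the theorem.

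\textbf{What is missing.} You need a segment family with two properties. It must be rich enough that every triple of elements, including mixed triples with anti-segments, still has a transversal. It must also be rigid enough that all pairs of lines stabbing every segment fall into a few combinatorial types, each blocked by some anti-segment. The paper does this with a regular 9-gon $p_0,\dots,p_8$:
\begin{itemize}
\item The green segments $p_kp_{k+3}$ form three triangles, so no single line stabs them all.
\item The red anti-segments have removed (open) parts $p_kp_{k+4}$. If two green-stabbing lines both cross the interior of the hull, some red removed part is crossed by both.
\item The purple anti-segments have removed parts from the points $w_k$ to crossings of green segments. They handle the remaining case, where one line is tangent to the hull and the other must cut deeper.
\end{itemize}
The 3-piercing property is then checked explicitly, by the number of green segments in the triple:
\begin{itemize}
\item Two intersecting greens: use a line through their crossing, almost parallel to the removed part.
\item Two disjoint greens: use one of the two lines joining corresponding endpoints.
\item One green: use one of four slightly rotated hull-edge lines. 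Their sets of missed purple anti-segments are arranged so that any two anti-segments are pierced by one of them.
\end{itemize}
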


\begin{proof}
   To explicitly construct such a family, we start by placing nine points in convex position as the vertices of a regular 9-gon. 
   That is, we choose a set of points $P_9=\{p_0,\ldots,p_8\}$ such that for $k\in \{0,\ldots,8\}$ we have \begin{equation*}
        p_k=\left(\cos\left(\frac{2\pi}{9}\cdot k\right),\sin\left(\frac{2\pi}{9}\cdot k\right)\right).
    \end{equation*} 
    We now define one set of segments and two sets of anti-segments using point set $P_9$.
    
    \paragraph*{Green Segments} The first family of segments, which we call the \emph{green segments}, are the segments spanned by two of the points with two points on one and five points on the other side, i.e., $g_k$ is the segment connecting $p_k$ and $p_{(k+3)\bmod{9}}$ for $k\in \{0,...,8\}$. For an illustration, see the green segments in Figure~\ref{fig:lbred}. 
    \begin{figure*}[t]
\centering
\begin{subfigure}{0.24\textwidth}
\centering
\includegraphics[scale=0.6]{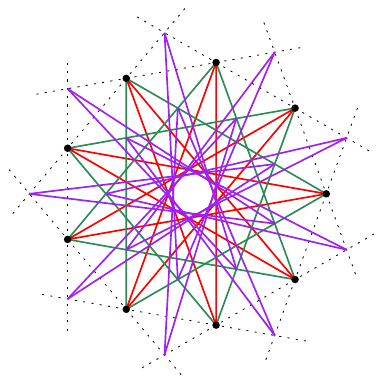}
\subcaption{}
\label{fig:lbcomplete}
\end{subfigure}
\hfill
\begin{subfigure}{0.24\textwidth}
\centering
\includegraphics[scale=0.6]{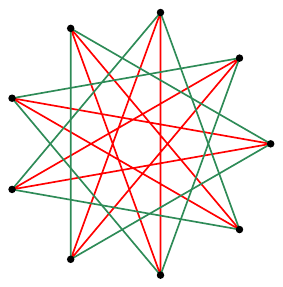}
\subcaption{}
\label{fig:lbred}
\end{subfigure}
\hfill
\begin{subfigure}{0.24\textwidth}
\centering
\includegraphics[scale=0.6]{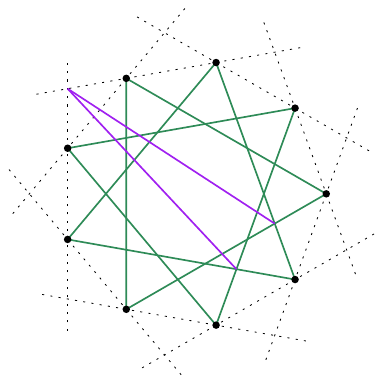}
\subcaption{}
\label{fig:lbpurple}
\end{subfigure}
\hfil
\begin{subfigure}{0.24\textwidth}
\centering
\includegraphics[scale=0.6]{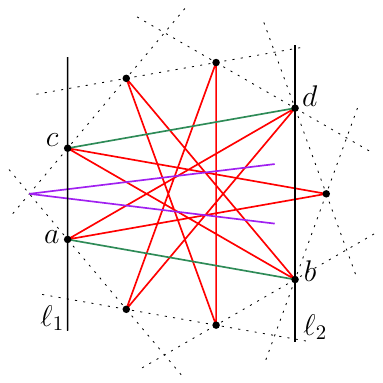}
\subcaption{}
\label{fig:2green}
\end{subfigure}

\begin{subfigure}{0.24\textwidth}
\centering
\includegraphics[scale=0.6]{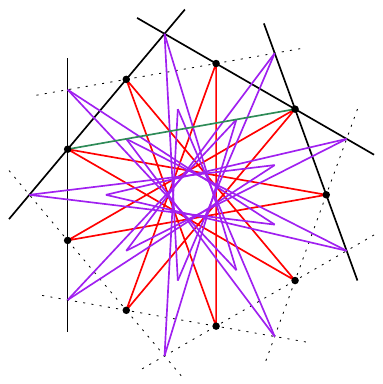}
\subcaption{}
\label{fig:1green}
\end{subfigure}
\hfil
\begin{subfigure}{0.24\textwidth}
\centering
\includegraphics[scale=0.6]{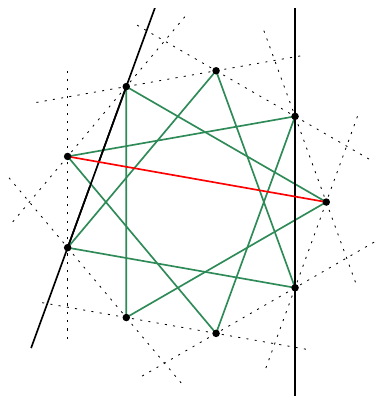}
\subcaption{}
\label{fig:NoPiercingCase1}
\end{subfigure}
\hfil
\begin{subfigure}{0.24\textwidth}
\centering
\includegraphics[scale=0.6]{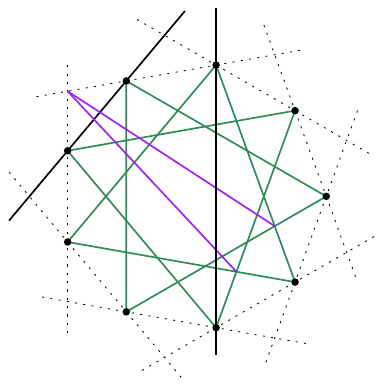}
\subcaption{}
\label{fig:NoPiercingCase2}
\end{subfigure}

\caption{Proof of Theorem~\ref{thm:3lines_necessary} (for anti-segments, the complement segments are shown): (a)~$\mathcal{F}$. (b)~Green segments and red anti-segments. (c)~Two purple anti-segments. (d)~Two parallel green segments and one anti-segment can always be pierced.  (e)~One green segment and two anti-segments can always be pierced. (f)--(g)~Non-pierceability by two lines. }
\end{figure*}
\paragraph*{Red Anti-Segments} We then add the first set of anti-segments, which we call the \emph{red anti-segments}. We define these anti-segments through their complements, which we depict as (open) red segments. These segments are again spanned by points of the point set, this time with three points on one side and four points on the other side. In other words, for $k\in \{0,...,8\}$, anti-segment $r_k$ is the anti-segment that avoids the segment between points $p_k$ and $p_{(k+4)\bmod 9}$;  see Figure~\ref{fig:lbred} for an illustration. Note that a point of the point set intersects two of the green segments but none of the red segments, as they are open.

\paragraph*{Purple Anti-Segments} We now add a second set of anti-segments, called the \emph{purple anti-segments}, which we again define through their complements, which we depict by purple segments. For ease of construction, we consider these purple segments to be closed, but we can make them open by prolonging them by $\varepsilon$ in both directions and removing the endpoints. The purple segments are spanned by two endpoints. 
More precisely, for each $k\in \{0,...,8\}$, two purple anti-segments $q_k$ and $\tilde{q}_k$ share an endpoint $w_k$, which is the intersection of the two supporting lines of the segments $\overline{p_{k-1}p_{k}}$ and $\overline{p_{k+1}p_{k+2}}$ on the nonagon $[p_0,\ldots,p_k]$. The second endpoint of $q_k$ is the intersection of the two green segments $g_{k+2}$ and $g_{k+4}$ whereas the other endpoint of $\tilde{q}_k$ is the intersection of the two  green segments $g_{k+3}$ and $g_{k+5}$; see Figure~\ref{fig:lbpurple} for an illustration. 

\noindent The full construction can be found in Figure~\ref{fig:lbcomplete}.

\begin{claim}\label{lem:3piercing}
Let $\mathcal{F}$ be the family of segments and anti-seg\-ments defined above. Then any three elements of $\mathcal{F}$ can be pierced by a line.
\end{claim}

\begin{claimproof}
We make a case distinction on the number of green segments. The cases of three or zero green segments are easy to see. If there are two green segments, we distinguish two subcases, depending on whether the two green segments intersect or not.

Assume first that the two green segments intersect, and let $p$ be their point of intersection. Further, let $s$ be the red or purple segment that is the complement of the anti-segment in the set we are analyzing. We want to show that there is a line $\ell$ which intersects the two green segments but does not intersect $s$. This can be achieved by drawing $\ell$ through $p$ almost parallel to $s$.

Assume now that the two green segments $g_a=\overline{ab}$ and $g_c=\overline{cd}$ do not intersect. Then, up to relabeling, we must have that $a$ and $c$ are consecutive on the convex hull, whereas there is one vertex between $b$ and $d$; see Figure~\ref{fig:2green}. Let $\ell_1$ be the line through $a$ and $c$. This line misses all red segments and all but two purple segments. The line $\ell_2$ through $b$ and $d$ misses these two purple segments. Thus either $\ell_1$ or $\ell_2$ intersects $g_a$ and $g_c$ but misses the complement of the anti-segment.

This leaves us with the case of one green segment. Up to renaming, we assume that the green segment is $g_1$. Consider the four supporting lines $\ell_0$, $\ell_1$, $\ell_3$ and  $\ell_4$ of the segments $\overline{p_0p_1}$, $\overline{p_1p_2}$, $\overline{p_3p_4}$ and $\overline{p_4p_5}$, respectively; see Figure~\ref{fig:1green}.
By slightly rotating lines $\ell_0$ and $\ell_1$ clockwise and $\ell_3$ and $\ell_4$ counter-clockwise  around their  points shared with $g_1$, we avoid parallelism to anti-segments and ensure that \begin{itemize}
    \item $\ell_0$ pierces all the red anti-segments and all purple anti-segments except $q_0$, $\tilde{q}_0$, $q_1$ and $\tilde{q}_1$,
    \item $\ell_1$ pierces all the red anti-segments and all purple anti-segments except $q_1$, $\tilde{q}_1$, $q_2$ and $\tilde{q}_2$,
    \item $\ell_3$ pierces all the red anti-segments and all purple anti-segments except $q_2$, $\tilde{q}_2$, $q_3$ and $\tilde{q}_3$,
    \item $\ell_4$ pierces all the red anti-segments and all purple anti-segments except $q_3$, $\tilde{q}_3$, $q_4$ and $\tilde{q}_4$.
\end{itemize}

Based on this analysis of non-pierced anti-segments, it can now be seen that for any pair of red or purple anti-segments, one of these four lines pierces both, i.e., it misses the two complements of these two anti-segments. 
\end{claimproof}

\begin{claim}\label{lem:2missing}
Let $\mathcal{F}$ be the family of segments and anti-seg\-ments defined above. Then at least two lines are necessary to pierce all green segments. Further, any two lines that pierce all green segments miss some red or purple anti-segment.
\end{claim}

\begin{claimproof}
It is straightforward to check that no single line can pierce all green segments: 
since the green segments form three triangles, such a line would need to pass through three non-collinear vertices, which is impossible.

Consider now two lines that pierce all green segments. We distinguish two cases, depending on whether both of them intersect the interior of the convex hull of the 9 points or not. If both of them intersect the convex hull, then by construction of the red segments there is a red segment pierced by both lines; see Figure~\ref{fig:NoPiercingCase1}. This corresponds to an anti-segment that is not pierced by the two lines. If neither line intersects the interior of the convex hull, then one of them needs to be tangent to it. Observe that the tangent line can pierce at most four green segments, i.e.,  the other line must pierce at least five green segments, and thus must pass further inside the convex hull. In this case, by construction of the purple segments, there is a purple segment hit by both lines; see Figure~\ref{fig:NoPiercingCase2}.
\end{claimproof}

By combining Claims~\ref{lem:3piercing} and \ref{lem:2missing}, we get Theorem~\ref{thm:3lines_necessary}.
\end{proof}

\section{Computing the Intersection of Double-Wedges}
\label{sec:algorithms}
In this section, we present algorithms for computing the (common) intersection of an arrangement $\mathcal{D}$ of double-wedges in general and in particular cases.   

\subsection{The double-wedges in \texorpdfstring{$\mathcal{D}$}{D} do not cover all slopes} 
\label{sec:algoNoHourglass}
An $\mathcal{O}(n \log n)$-time algorithm~\cite{DBLP:journals/bit/EdelsbrunnerMPRWW82} is known for the case where all double-wedges are bowties. This algorithm can also be applied in the case where the set $\mathcal{D}$ of double-wedges does not cover all slopes in the plane, where $\mathcal{D}$ covers a slope $a \in \mathbb{R}\cup\{\infty\}$ if there exists a double-wedge in $\mathcal{D}$ containing a line of slope $a$.
In order to achieve this, we identify a slope $a$ not covered by any double-wedge and rotate $\mathcal{D}$ such that lines with slope $a$ become vertical. 
It is noteworthy that if $a$ exists, we may assume without loss of generality that $a$ is rational since the rationals are dense in the real line. This procedure results in a pure bowtie arrangement.

\begin{theorem}
\label{thm:algoSpecialCase}
Let $\mathcal{D}$ be an arrangement of double-wedges so that $\mathcal{D}$ does not cover all slopes in $\mathbb{R} \cup \{\infty=-\infty\}$. Then, the intersection of $\mathcal{D}$ can be computed in time $\mathcal{O}(n \log n)$.
\end{theorem}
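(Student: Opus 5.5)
Our plan is to reduce to the pure-bowtie case and then invoke the $\mathcal{O}(n\log n)$ algorithm of~\cite{DBLP:journals/bit/EdelsbrunnerMPRWW82}, as sketched before the statement. Every double-wedge $d=\langle h_1,h_2\rangle$ with bounding-line slopes $a_1,a_2$ covers exactly a closed (or open, depending on the boundary convention) arc of the projective circle of slopes $\mathbb{R}\cup\{\infty\}$. The arc runs from $a_1$ to $a_2$: the bounded interval $[\min(a_1,a_2),\max(a_1,a_2)]$ for a bowtie, and the complementary arc through $\infty$ for an hourglass. A line of slope $a$ lies in $d$ exactly when $a$ is in this arc, since a double-wedge contains a line through its origin of every slope in that arc, and hence contains translates of such lines as well.

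In a first step we compute these $n$ arcs in $\mathcal{O}(n)$ time. Next, we sort their $2n$ endpoints and sweep once around the circle, maintaining a counter of how many arcs are currently open. This finds, in $\mathcal{O}(n\log n)$ time, a maximal open gap of slopes covered by no arc; by hypothesis such a gap exists and is nonempty. We pick a rational slope $a$ strictly inside the gap, for instance the midpoint when both endpoints are finite. If the gap contains $\infty$, we can take $a=\infty$, in which case no hourglass is present and no rotation is needed.

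We then apply a rotation (or, to stay rational, the shear/projective map $(x,y)\mapsto(y-ax,\,x)$, or a rotation with rational cosine and sine) that sends slope $a$ to the vertical direction. Because the map is an affine bijection, it maps each double-wedge to a double-wedge and preserves intersections, so the intersection of $\mathcal{D}$ is the preimage of the intersection of the image arrangement. After the transformation no double-wedge contains a vertical line, so every element is a bowtie. A small additional perturbation of $a$ within the open gap ensures that no bounding line becomes vertical or horizontal, matching the standing assumption. We then run the algorithm of~\cite{DBLP:journals/bit/EdelsbrunnerMPRWW82} in $\mathcal{O}(n\log n)$ time. That algorithm outputs $\mathcal{O}(n)$ regions, which we map back with the inverse transformation in linear time.

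The main obstacle is justifying cleanly that "covers slope $a$" is exactly the condition that becomes "contains a vertical line" after the transformation. This requires checking that an affine map preserves the property "contains a line of direction $v$", with $v$ mapped accordingly. It also requires handling boundary conventions: whether bounding lines belong to the double-wedge decides whether the arcs are open or closed. Choosing $a$ in the open interior of the gap avoids that issue entirely.
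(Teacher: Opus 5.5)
Your proposal is correct and follows essentially the same route as the paper: sort the bounding-line slopes, sweep with a counter that changes by $\pm1$ at each slope (the paper starts at slope $-\infty$ with the count equal to the number of hourglasses) to find an uncovered slope $a$ between two consecutive slopes, make $a$ vertical so that only bowties remain, and run the $\mathcal{O}(n\log n)$ algorithm of Edelsbrunner et al. Your explicit linear map $(x,y)\mapsto(y-ax,x)$ and your check that affine bijections preserve double-wedges and line directions fill in the rotation step the paper leaves implicit. One wording slip: a double-wedge contains only the line of a given covered slope through its origin, not its translates, but covering a slope needs just one such line, so the argument is unaffected.
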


\begin{proof}
   It remains to argue that $a$ can be computed in time $\mathcal{O}(n \log n)$. 
We first sort the bounding lines of all double-wedges by slope in time $\mathcal{O}(n \log n)$ and check for slope $a^*=-\infty$ in how many double-wedges it lies. In fact, this number is equal to the number of hourglasses and it can be found in linear time by checking for every double-wedge individually whether it is an hourglass or not. If it is zero, then we can directly apply the algorithm from~\cite{DBLP:journals/bit/EdelsbrunnerMPRWW82}. Otherwise, we iteratively increase $a^*$ and whenever $a^*$ becomes larger than the slope of a line bounding a double-wedge $d$, we update the number of double-wedges $a^*$ is in. 
Note that the number of double-wedges covering a slope only changes by $\pm1$ on each such event (and does not change between two neighboring slopes in the sorted slope list). 
Hence, going once through the sorted list of slopes, each update can be done in constant time (by checking the double-wedge $d$).
This procedure takes $\mathcal{O}(n)$ time
and identifies two slopes of bounding lines $a_1$ and $a_2$ so that no slope between $a_1$ and $a_2$ is covered by any double-wedge in $\mathcal{D}$. 
We choose any slope between $a_1$ and $a_2$ for $a$, e.g., $a=(a_1+a_2)/2$.   
\end{proof}

Edelsbrunner et al.~\cite{DBLP:journals/bit/EdelsbrunnerMPRWW82} also state that the computation of the intersection of $n$ bowties requires  $\Omega(n \log n)$ time. 
Since this is a special case of the instances discussed here, the running time of our algorithm is worst-case optimal. We further emphasize that the algorithm for identifying $a$ can also be used to check whether $\mathcal{D}$ covers all slopes in time $\mathcal{O}(n \log n)$. 

\subsection{General setting} 

\label{sec:algoHourglass}
For the general setting, and in particular if the double-wedges in {$\mathcal{D}$} cover all slopes, we can compute in time $\mathcal{O}(n^2)$ the line arrangement formed by the lines bounding the $n$ double-wedges in $\mathcal{D}$. 
To store it, we use the doubly-connected edge list (DCEL) data structure ~\cite{MullerP78}. 
In it, each edge of the line arrangement is decomposed into two directed half-edges pointing in opposite directions. 
(To handle unbounded edges one can add an infinite vertex connected to all of them or place a large bounding box.)
The DCEL contains a record for each half-edge, vertex, and cell.
Each half-edge stores pointers to its target vertex, its twin half-edge, the cell it bounds, and the next and previous half-edges in the cell's boundary cycle. 
Vertices store a pointer to one of their incident half-edges, while cells store a pointer to one of their boundary half-edges.

It is a well-known consequence of the Zone theorem that we can construct the DCEL incrementally in $\mathcal{O}(n^2)$ time, inserting the $2n$ lines bounding the $n$ double-wedges one at a time~\cite{deBerg}.
Now the idea is to explore all the cells in the line arrangement traversing its planar-dual graph in BFS order. 
The planar-dual $G^*$ of a planar graph such as the planarization $G$ of our line arrangement has an edge for each pair of cells in $G$ that are separated by an edge. 
For our algorithm we do not need to compute $G^*$ explicitly; the DCEL suffices. 

Choose an arbitrary cell $f_0$ of the arrangement and compute the number of double-wedges of $\mathcal{D}$ in which $f_0$ is contained in $\mathcal{O}(n)$ time.
Then, traverse the cells of the arrangement starting from~$f_0$ in BFS order in $G^*$-- the DCEL allows accessing the neighboring cells of a cell $f$ in constant time per neighboring cell of $f$. 
Moving to a neighboring cell corresponds to either leaving or entering a single double-wedge of $\mathcal{D}$ as characterized in Observation~\ref{obs:intersectionCharacterization}. 
Whenever we visit a new cell in the arrangement during the traversal, we can determine the number of double-wedges in $\mathcal{D}$ that contain this cell in constant time. 
We can thus determine all cells in the intersection of $\mathcal{D}$ in $\mathcal{O}(n^2)$ time.

\begin{theorem}\label{thm:algoGeneralCase}
Let $\mathcal{D}$ be an arrangement of $n$ double-wedges. Then the intersection of $\mathcal{D}$ can be computed in time $\mathcal{O}(n^2)$.
\end{theorem}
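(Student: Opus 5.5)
The plan is to formalize the algorithm sketched just before the statement and check correctness and running time separately.

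First, build the arrangement $\mathcal{L}$ of the $2n$ lines bounding the double-wedges of $\mathcal{D}$ and store it in a DCEL. Inserting the lines one by one, the Zone theorem bounds the cost of inserting the $i$-th line by $\mathcal{O}(i)$, so the whole construction takes $\mathcal{O}(n^2)$ time. The arrangement has $\mathcal{O}(n^2)$ vertices, edges and cells, so $G^*$ also has $\mathcal{O}(n^2)$ size. Every cell of $\mathcal{L}$ lies entirely inside or entirely outside each double-wedge, since the boundary of a double-wedge lies on its two bounding lines. Hence the intersection of $\mathcal{D}$ is exactly the union of the closures of those cells contained in all $n$ double-wedges, together with lower-dimensional pieces (edges, vertices) that can be handled the same way if desired.

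Second, define for each cell $f$ the counter $c(f)$, the number of double-wedges containing $f$. Compute $c(f_0)$ for an arbitrary starting cell $f_0$ by picking an interior point, for instance the midpoint of a bounded edge of $f_0$ shifted slightly inwards or any explicit point obtained from the DCEL. Then test each double-wedge in $\mathcal{O}(1)$ time via Observation~\ref{obs:intersectionCharacterization}, for $\mathcal{O}(n)$ total.

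Third, run BFS on $G^*$ using the DCEL's twin pointers. When crossing an edge $e$ from $f$ to $f'$, $e$ lies on a single bounding line $\ell$, which is stored with the half-edge as the line that created it during incremental construction. If bounding lines of different double-wedges coincide, we perturb or group them; assuming general position, $\ell$ belongs to exactly one double-wedge $d$. Crossing a bounding line of $d = \langle h_1,h_2\rangle$ flips membership in exactly one of $h_1,h_2$, so it flips membership in the symmetric difference, and thus in $d$. All other double-wedges keep their status. Consequently $c(f') = c(f) \pm 1$, with the sign determined in $\mathcal{O}(1)$ by testing whether a point of $f$ lies in $d$; alternatively we store with each cell the relevant bit. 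Each dual edge is processed in constant time, so the BFS costs $\mathcal{O}(n^2)$. Finally, report all cells with $c(f)=n$ and merge adjacent reported cells, which again takes linear time in the arrangement size.

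The main obstacle is the subtle bookkeeping rather than anything deep. Each half-edge must know which double-wedge its supporting line belongs to, which we arrange by labeling edges during insertion. Degenerate cases must also be handled: parallel or coincident bounding lines, and the unbounded cells addressed by a bounding box or vertex at infinity. I would handle coincident lines by letting a crossing update the counters of all double-wedges sharing that line, which costs $\mathcal{O}(\text{multiplicity})$ and still sums to $\mathcal{O}(n^2)$ overall.
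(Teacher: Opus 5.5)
Your proposal is correct and follows the paper's argument essentially step for step. Like the paper, you build the DCEL of the $2n$ bounding lines in $\mathcal{O}(n^2)$ time via the Zone theorem, count the double-wedges containing a starting cell $f_0$ in $\mathcal{O}(n)$, and propagate this count by BFS over the dual graph, using that crossing an edge enters or leaves exactly one double-wedge. Your extra bookkeeping (labeling half-edges with their double-wedge, fixing the sign of the $\pm1$ update in $\mathcal{O}(1)$, charging coincident lines by multiplicity) fills in details the paper leaves implicit without changing the approach.
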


By Theorem~\ref{thm:complexityOfIntersection}, the intersection of $\mathcal{D}$ can consist of $\Omega(n^2)$ interior-disjoint regions. Thus, enumerating all those regions takes time $\Omega(n^2)$ and our algorithm is worst-case optimal. Our approach can also be used to identify the set of cells contained in at least $b$ bowties and $h$ hourglasses.

\subsection{Only few hourglasses in \texorpdfstring{$\mathcal{D}$}{D}}

We can obtain the following result, which is an improvement for the case in which there are only $k \ll n$ hourglasses in $\mathcal{D}$. 
In this case, the intersection can be computed in $\mathcal{O}(kn\log n)$ time.

\begin{theorem}
\label{thm:algoParameterized}
    Let $\mathcal{D}$ be an arrangement of $n$ double-wedges, $k$ of which are hourglasses. Then the intersection of $\mathcal{D}$ can be computed in time $\mathcal{O}(k^2 + nk)\log (n + k^2))$.
\end{theorem}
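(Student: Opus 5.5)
Let $\mathcal{B}\subseteq\mathcal{D}$ be the $n-k$ bowties and $\mathcal{H}$ the $k$ hourglasses. The plan is to compute the bowtie intersection first and then carve out the complements of the hourglasses. By the algorithm of Edelsbrunner et al.~\cite{DBLP:journals/bit/EdelsbrunnerMPRWW82}, the region $I_{\mathcal B}=\bigcap\mathcal{B}$ can be computed in $\mathcal{O}(n\log n)$ time. It is the region between an upper envelope of the lower traces and a lower envelope of the upper traces, hence it is $x$-monotone and consists of $\mathcal{O}(n)$ regions with total complexity $\mathcal{O}(n)$. By Observation~\ref{obs:intersectionCharacterization}, the intersection of $\mathcal{D}$ is $I_{\mathcal B}$ minus the union of the open complements of the hourglasses. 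Each such complement is a bowtie-shaped open region bounded by two of the $2k$ bounding lines of $\mathcal{H}$.

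\textbf{Step 1.} Build the arrangement $\mathcal{A}_{\mathcal H}$ of the $2k$ lines bounding the hourglasses with the DCEL approach of Theorem~\ref{thm:algoGeneralCase}. Using the BFS traversal described there, mark every cell that lies in all $k$ hourglasses. This takes $\mathcal{O}(k^2)$ time and yields $\bigcap\mathcal{H}$ as a union of $\mathcal{O}(k^2)$ cells of total complexity $\mathcal{O}(k^2)$.

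\textbf{Step 2.} Overlay $\bigcap\mathcal{H}$ with $I_{\mathcal B}$. I would do this with a plane sweep (Bentley--Ottmann style, or a standard map overlay) over the $\mathcal{O}(k^2)$ edges of $\mathcal{A}_{\mathcal H}$ and the $\mathcal{O}(n)$ edges of $I_{\mathcal B}$. The sweep costs $\mathcal{O}((k^2+n+X)\log(n+k^2))$, where $X$ is the number of intersections between the two edge sets. The crucial count is $X=\mathcal{O}(nk)$. Each of the $2k$ lines crosses the $x$-monotone boundary of $I_{\mathcal B}$ a bounded number of times per boundary piece, and that boundary consists of two polygonal chains (upper and lower envelopes) with $\mathcal{O}(n)$ edges in total. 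So each line meets it in $\mathcal{O}(n)$ points, giving $X=\mathcal{O}(nk)$. This bound also covers the edges of $\mathcal{A}_{\mathcal H}$, since those edges are pieces of these lines. Intersections among the arrangement edges themselves are the $\mathcal{O}(k^2)$ arrangement vertices, and these are already known. The overall running time is therefore $\mathcal{O}((k^2+nk)\log(n+k^2))$.

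\textbf{Step 3.} Report the cells of the overlay that lie in both $I_{\mathcal B}$ and $\bigcap\mathcal{H}$; a single traversal of the overlay suffices. In practice I would merge Steps 1 and 2 into one sweep over all $2k$ hourglass lines together with the $\mathcal{O}(n)$ envelope edges, tracking containment counts as in Section~\ref{sec:algoHourglass}.

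The main obstacle I expect is justifying the $\mathcal{O}(n)$ complexity and the $x$-monotone envelope structure of $I_{\mathcal B}$. I would recover both from Observation~\ref{obs:intersectionCharacterization}: the boundary of $I_{\mathcal B}$ is the lower envelope of the upper traces together with the upper envelope of the lower traces. Each trace is the graph of a convex or concave piecewise-linear function with two pieces. Hence each envelope has complexity $\mathcal{O}(n)$ (for instance via Davenport--Schinzel sequences of low order, or directly from~\cite{DBLP:journals/bit/EdelsbrunnerMPRWW82}), and any line meets each envelope in $\mathcal{O}(n)$ points.
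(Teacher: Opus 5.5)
Your proposal is correct and follows the paper's proof: compute $I_{\mathcal B}$ with the algorithm of Edelsbrunner et al., compute $\bigcap\mathcal{H}$ by the $\mathcal{O}(k^2)$ arrangement traversal, overlay the two, and bound the number of new crossings by $\mathcal{O}(nk)$. The only difference is how you get the $\mathcal{O}(nk)$ bound: you charge at most one crossing per line to each of the $\mathcal{O}(n)$ boundary edges of $I_{\mathcal B}$, whereas the paper uses that $I_{\mathcal B}$ consists of $n+1$ convex regions whose boundaries each meet a line at most twice; for the $\mathcal{O}(n)$ complexity of $I_{\mathcal B}$, rely on the citation as the paper does, because two upper traces can cross three times and a Davenport--Schinzel argument applied directly to the traces only gives $\mathcal{O}(n\alpha(n))$.
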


\begin{proof}
    We begin by computing the intersection $\mathcal{I}_b$ of the $n-k$ bowties using the algorithm by Edelsbrunner et al.~\cite{DBLP:journals/bit/EdelsbrunnerMPRWW82} in time $\mathcal{O}(n \log n)$. Moreover, we compute the intersection $\mathcal{I}_h$ of the $k$ hourglasses using Theorem~\ref{thm:algoGeneralCase} in time $\mathcal{O}(k^2)$. According to Edelsbrunner et al.~\cite{DBLP:journals/bit/EdelsbrunnerMPRWW82}, $\mathcal{I}_b$ is a polygonal domain consisting of $n+1$ convex polygonal regions with $\mathcal{O}(n)$ vertices and edges in total, whereas $\mathcal{I}_h$ is a polygonal domain with $\mathcal{O}(k^2)$ vertices and edges, because it is defined by $\mathcal{O}(k)$ lines. Thus, according to \cite[Theorem 2.6]{deBerg}, the intersection of $\mathcal{I}_b$ and $\mathcal{I}_h$ can be computed in time $\mathcal{O}((n+k^2)\log(n+k^2)+c\log(n+k^2))$ where $c$ denotes the size of the intersection of $\mathcal{I}_b$ and $\mathcal{I}_h$. 
    For $c$, observe that it is defined by the sizes of $\mathcal{I}_b$ and $\mathcal{I}_h$ and the number of intersections between them. 
    The boundary of each of the $n+1$ convex polygonal regions in $\mathcal{I}_b$ can have at most two intersection points with each of the $2k$ lines bounding the $k$ hourglasses. 
    In the worst case, we have $\mathcal{O}(nk)$ new boundary pieces for the intersection between $\mathcal{I}_b$ and $\mathcal{I}_h$.
    Hence, $c=\mathcal{O}(k^2 + nk)$ and the theorem follows.
    \end{proof}

\section{3SUM-Hardness of Stabbing and Avoiding Segments}
\label{sec:3sum}

\noindent Note that Theorem~\ref{thm:algoGeneralCase} immediately implies the following:

\begin{corollary}\label{cor:complexity}
There are $\mathcal{O}(n^2)$-time algorithms for \textsc{Double-Wedge Intersection} and \textsc{Stabbing and Avoiding Segments}.
\end{corollary}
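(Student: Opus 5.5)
The corollary follows from Theorem~\ref{thm:algoGeneralCase} together with the duality set up in Section~\ref{sec:preliminaries}. I would argue the two problems separately, first \textsc{Double-Wedge Intersection} and then \textsc{Stabbing and Avoiding Segments} by reduction to it.

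\textbf{Double-Wedge Intersection.} Given $\mathcal{D}$, run the algorithm of Theorem~\ref{thm:algoGeneralCase}. It builds the DCEL of the arrangement of the $2n$ bounding lines and traverses its $\mathcal{O}(n^2)$ cells in BFS order over the planar dual. For each cell it records how many double-wedges contain it, with constant-time updates as the traversal crosses an edge. The answer is ``yes'' exactly when some cell, edge or vertex of the arrangement lies in all $n$ double-wedges.

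One detail needs care when the bounding lines belong to the double-wedges, i.e.\ the closed case. Then the common intersection may be degenerate, consisting only of edges or vertices. The DCEL also holds these features, so I would treat each edge and vertex as a feature too. Its containment count can be obtained in constant time from an adjacent cell by adjusting for the $\mathcal{O}(1)$ lines through it; a vertex, assuming general position, lies on only two lines. The total work stays $\mathcal{O}(n^2)$.

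\textbf{Stabbing and Avoiding Segments.} Given $\mathcal{S}$ and $\mathcal{A}$, first rotate the instance so that no segment is vertical. This costs $\mathcal{O}(n)$ time with a suitable rational rotation. Then apply the duality $T$:
\begin{itemize}
\item each $s\in\mathcal{S}$ maps to a bowtie, the set of points dual to non-vertical lines stabbing $s$;
\item each $a\in\mathcal{A}$ gives the anti-segment $\ell_a\setminus a$, where $\ell_a$ is the supporting line of $a$, and this anti-segment maps to an hourglass.
\end{itemize}
A line avoids $a$ exactly when it meets $\ell_a\setminus a$ or is parallel to $\ell_a$. Parallel lines correspond to points on the boundary of the hourglass, which are handled by the endpoint convention discussed in the preliminaries.

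Vertical lines have no dual point, so a vertical stabbing line would be missed. I would handle this either by checking vertical candidate lines separately in $\mathcal{O}(n\log n)$ time, sweeping over $x$-coordinates of endpoints, or by choosing the rotation generically and arguing that a solution can be perturbed to a non-vertical one. The reduction then yields an instance of \textsc{Double-Wedge Intersection} of size $n$ in linear time, and the first part solves it in $\mathcal{O}(n^2)$.

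The main obstacle is only bookkeeping:
\begin{itemize}
\item making the duality exact for lines parallel to an avoided segment;
\item handling degenerate or closed boundaries;
\item handling vertical lines.
\end{itemize}
None of these affects the quadratic bound.
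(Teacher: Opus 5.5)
Your proposal takes the same route as the paper, which reads the corollary directly off Theorem~\ref{thm:algoGeneralCase} and transfers it to \textsc{Stabbing and Avoiding Segments} via the duality $T$; your extra handling of degenerate (edge/vertex) intersections and your separate sweep for vertical stabbing lines go beyond what the paper spells out and are sound. One small correction: lines parallel to $\ell_a$ dualize to the vertical line through the hourglass's origin, which lies in its interior (this is exactly why it is an hourglass), so they need no endpoint convention; the actual exceptional points are the double-wedge origins themselves (e.g., the origin of a closed hourglass is dual to $\ell_a$, which meets $a$), one arrangement vertex each, and patching them does not affect the $\mathcal{O}(n^2)$ bound.
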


The result by Edelsbrunner et al.~\cite{DBLP:journals/bit/EdelsbrunnerMPRWW82} implies the existence of $\mathcal{O}(n \log n)$-time algorithms in the special cases where $\mathcal{D}$ contains no hourglasses in \textsc{Double-Wedge Intersection} and where $\mathcal{A}=\emptyset$ in \textsc{Stabbing and Avoiding Segments}. We provide evidence that our $\mathcal{O}(n^2)$-time algorithms are likely almost worst-case optimal by reduction from the following problem.

\definitionBox{
\textsc{3SUM}

\textbf{Input.} A set $A \subset \mathbb{R}$ of $n$ real numbers.

\textbf{Question.} Are there three numbers $a,b,c \in A$ such that $a+b+c=0$?
}

There are $o(n^2)$-time algorithms for \textsc{3SUM}~\cite{DBLP:journals/talg/Chan20,DBLP:journals/jacm/GronlundP18}, but the following is widely believed:

\begin{conjecture}[No Truly Subquadratic 3SUM~\cite{DBLP:conf/focs/AbboudW14,DBLP:journals/comgeo/GajentaanO95,DBLP:conf/stoc/Patrascu10}]\label{con:3sum} There is no $\mathcal{O}(n^{2-\varepsilon})$-time algorithm for \textsc{3SUM} for any $\varepsilon>0$.
\end{conjecture}

The importance of Conjecture~\ref{con:3sum} lies in the fact that many (geometric) problems have been classified as \emph{\textsc{3SUM}-hard}~\cite{DBLP:journals/comgeo/GajentaanO95}, i.e., for many problems $P$ there is a subquadratic-time and linear-space reduction from \textsc{3SUM} to $P$. Such a reduction implies that $P$ cannot be solved in truly subquadratic time unless Conjecture~\ref{con:3sum} fails. 
Gajentaan and Overmars~\cite{DBLP:journals/comgeo/GajentaanO95} show \textsc{3SUM}-hardness for several geometric problems. Crucially, they establish \textsc{3SUM}-hardness for the following geometric problem—a fact we use in the proof of Theorem~\ref{thm:3sum}; see also Figure~\ref{fig:3sum:1}.

\definitionBox{
\textsc{GeomBase}

\textbf{Input.} A set $U \subset \mathbb{N}^2$ of $n$ points on three horizontal lines $A\colon y=0$, $B\colon y=1$, $C\colon y=2$.

\textbf{Question.} Is there a non-horizontal line $\ell \subset \mathbb{R}^2$ containing three points of $U$?
}

\begin{theorem}[\cite{DBLP:journals/comgeo/GajentaanO95}]
\textsc{GeomBase} cannot be solved in  $\mathcal{O}(n^{2-\varepsilon})$ time for any $\varepsilon>0$ unless Conjecture~\ref{con:3sum} fails.
\end{theorem}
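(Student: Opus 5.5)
The plan is a direct linear-time reduction from \textsc{3SUM}. Points on three horizontal lines are collinear exactly when the middle one sits at the midpoint of the outer two, and this midpoint condition can be made to encode $a+b+c=0$.

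First I would pass to the three-set variant, which I call 3SUM$'$ here. Given sets $A,B,C$ of $n$ numbers in total, it asks whether some $a\in A$, $b\in B$, $c\in C$ satisfy $a+b+c=0$. This is used instead of \textsc{3SUM} because it removes the question of whether $a,b,c$ must be distinct elements. I would cite the standard fact that \textsc{3SUM} reduces to 3SUM$'$ in linear time, so 3SUM$'$ is also not solvable in $\mathcal{O}(n^{2-\varepsilon})$ time under Conjecture~\ref{con:3sum}. One realization of that fact, for integer inputs, maps each $x\in A$ to three shifted copies $x+M$, $x+2M$ and $-x-3M$ for a large $M$ (sufficiently large relative to $\max|x|$); the shifts force one element from each copy. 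I would not reprove this step.

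Next, given an integer instance of 3SUM$'$, let $M$ exceed twice the largest absolute value in the input. I would create three families of points:
\begin{itemize}
\item the points $(2a+M,0)$ for $a\in A$, on line $y=0$;
\item the points $(M-c,1)$ for $c\in C$, on line $y=1$;
\item the points $(2b+M,2)$ for $b\in B$, on line $y=2$.
\end{itemize}
By the choice of $M$ all coordinates are nonnegative integers, so the resulting set $U$ lies in $\mathbb{N}^2$. The construction takes $\mathcal{O}(n)$ time and space.

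For correctness, I would first note that a non-horizontal line meets each of the three horizontal lines in exactly one point. Hence such a line contains three points of $U$ if and only if it contains one point from each line. Three points $(x_0,0)$, $(x_1,1)$, $(x_2,2)$ are collinear iff $x_1=(x_0+x_2)/2$. With the coordinates above this reads $M-c=(2a+2b+2M)/2=a+b+M$, which is equivalent to $a+b+c=0$. So the \textsc{GeomBase} instance is a yes-instance iff the 3SUM$'$ instance is.

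Consequently, an $\mathcal{O}(n^{2-\varepsilon})$-time algorithm for \textsc{GeomBase} would give one for 3SUM$'$, and hence for \textsc{3SUM}, contradicting Conjecture~\ref{con:3sum}. The main obstacle is not the geometry but the bookkeeping. It lies in the distinctness reduction to 3SUM$'$ and in integrality: the requirement $U\subset\mathbb{N}^2$ forces using the integer version of the conjecture. The factor $2$ and the shift by $M$ are exactly what keep all coordinates nonnegative integers.
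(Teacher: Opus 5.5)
Your reduction is correct and is essentially the Gajentaan--Overmars argument that the paper cites without proof. You pass to the three-set variant, then encode $a+b+c=0$ as the midpoint condition $x_1=(x_0+x_2)/2$ on the lines $y=0,1,2$, with the doubling and the shift by $M$ keeping $U\subset\mathbb{N}^2$. You are also right that this relies on the integer form of the conjecture, whereas the paper phrases \textsc{3SUM} over $\mathbb{R}$.

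The one slip is your illustrative realization of \textsc{3SUM}$\,\to\,$3SUM$'$. The shifted copies $x+M$, $x+2M$, $x-3M$ are the trick for the \emph{opposite} direction (merging three sets into one). With $-x-3M$ they would moreover encode $x+y=z$ rather than $x+y+z=0$. For the direction you need, simply take $A=B=C=S$; this is exact because the paper's definition does not demand distinct elements. So the three-set variant does not really remove the distinctness question; it only moves it into this step.
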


\begin{theorem}
\label{thm:3sum}
\textsc{Stabbing and Avoiding Segments} cannot be solved in  $\mathcal{O}(n^{2-\varepsilon})$ time for any $\varepsilon>0$ unless Conjecture~\ref{con:3sum} fails.
\end{theorem}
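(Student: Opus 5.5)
We reduce from \textsc{GeomBase} in linear time and space. Given an instance $U$ with points on the lines $A\colon y=0$, $B\colon y=1$ and $C\colon y=2$, we build sets $\mathcal{S}$ and $\mathcal{A}$ of $O(n)$ segments. We want a stabbing-and-avoiding line for $(\mathcal{S},\mathcal{A})$ to exist if and only if some non-horizontal line passes through three points of $U$, one on each of $A$, $B$, $C$. The guiding idea is that the avoided segments punch ``holes'' only at the points of $U$, so the segments lying on the three horizontal lines can only be crossed through those holes.

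The construction works as follows. Let $x_{\min}$ and $x_{\max}$ be the smallest and largest $x$-coordinates in $U$, and pad them by a margin. On each line $L\in\{A,B,C\}$, take the segment from $x_{\min}-1$ to $x_{\max}+1$. Cut it at the points of $U\cap L$, which leaves the open gaps of length zero, and put the resulting closed pieces minus tiny $\varepsilon$-neighbourhoods of the points into $\mathcal{A}$. Concretely, for consecutive points $u<u'$ of $U\cap L$, the avoided segment is $[u+\varepsilon,u'-\varepsilon]\times\{y_L\}$, plus the two outer pieces. This requires sorting, which costs $O(n\log n)$ and is subquadratic. Any line that avoids all of $\mathcal{A}$ and also crosses the strip between $x_{\min}-1$ and $x_{\max}+1$ at height $y_L$ must do so within $\varepsilon$ of a point of $U\cap L$.

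To force the line to cross all three horizontal lines inside this range, add three short vertical segments to $\mathcal{S}$: for each $L$, a segment $\{x_{\min}\}\times[y_L-\delta,y_L+\delta]$ is not enough on its own, so instead I would enclose things. I would put long avoided segments on the lines $y=-1$ and $y=3$ (spanning a huge range), and add stabbed vertical segments $\{0\}\times[0,2]$ spanning the strip. A line that avoids the horizontals at $y=-1$ and $y=3$ within a range $[-M,M]$ is either nearly vertical inside the box or exits through the sides. Choosing $M$ much larger than the coordinates, together with the stabbed vertical segment, forces the line to have steep enough slope to meet $A$, $B$ and $C$ inside $[x_{\min}-1,x_{\max}+1]$. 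The same argument excludes horizontal lines, since those would overlap an avoided segment or miss the stabbed one.

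For correctness, a line through $a\in A$, $b\in B$, $c\in C$ avoids all gaps exactly and meets the stabbed segment, so it is a solution; it also avoids the outer bars, because it has bounded slope and is non-horizontal. Conversely, a solution passes within $\varepsilon$ of some $a$, $b$, $c$. The main obstacle is the rounding step: showing that an approximate collinearity within $\varepsilon$ implies exact collinearity. Since $U\subset\mathbb{N}^2$, the condition for collinearity reads $a+c-2b=0$, an integer quantity, while the near-collinearity forces $|a+c-2b|\le 4\varepsilon$. Taking $\varepsilon<1/4$ (in fact any constant such as $1/8$) therefore gives exact equality.

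The final solvability check is therefore a sub-quadratic, linear-size reduction with $n'=O(n)$. An $O(n'^{2-\varepsilon})$ algorithm would then solve \textsc{GeomBase} in truly subquadratic time, contradicting Conjecture~\ref{con:3sum}.
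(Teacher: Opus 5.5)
\textbf{There is a genuine gap in your enclosing gadget, and it breaks the reduction in both directions.} Your choice of \textsc{GeomBase}, the $\varepsilon$-holes around the points of $U$ in avoided segments on $y=0,1,2$, and your rounding step ($|a+c-2b|<1$ forces $a+c=2b$) are all sound and match the paper.

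All your avoided segments are horizontal, and your stabbed segment is the vertical $\{0\}\times[0,2]$. So the horizontal line $y=1/2$ solves every constructed instance. Your claim that horizontal lines ``would overlap an avoided segment or miss the stabbed one'' is false.

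More generally, the bars $[-M,M]\times\{-1\}$ and $[-M,M]\times\{3\}$ have no holes. A non-horizontal line avoids them only if it crosses $y=-1$ and $y=3$ at $|x|>M$, i.e., only if it is \emph{nearly horizontal}, which is the opposite of what you need. Such a line meets $y=0,1,2$ far outside $[x_{\min}-1,x_{\max}+1]$, so it avoids the perforated segments for free.

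The forward direction fails as well. The line through $(a,0),(b,1),(c,2)$ crosses $y=-1$ at $x=2a-b\in[-M,M]$, so it hits a bar. It also need not meet $\{0\}\times[0,2]$ at all, e.g., the vertical line $x=a$ with $a\neq 0$. Your assertion that it avoids the bars ``because it has bounded slope'' has the logic reversed.

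\textbf{The missing idea:} the walls that prevent escape must be \emph{vertical avoided} segments, and the stabbed segment must be \emph{horizontal} and lie between them. The paper takes $\mathcal{A}$ to be the boundary of $R=[0,X+1]\times[0,2]$ together with the middle segment $[0,X+1]\times\{1\}$, all perforated by $\varepsilon$-holes at the points of $U$. It takes $\mathcal{S}=\{s^*\}$ with $s^*=[\varepsilon,X+1-\varepsilon]\times\{1/2\}$ in the interior of $R$.

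A line meeting $s^*$ passes through the interior of $R$. Since it cannot cross the vertical sides (this also rules out $y=1/2$), it is non-horizontal and leaves $R$ through the top and bottom sides. Hence it crosses $y=0,1,2$ inside $(0,X+1)$, where only the holes are free, and your rounding argument applies. In the forward direction, the \textsc{GeomBase} line passes through the holes and crosses $y=1/2$ at $x=(a+b)/2\in[1,X]$, so it stabs $s^*$.

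Replacing your outer bars and vertical stabbed segment by two vertical avoided walls and a horizontal stabbed segment repairs the proof.
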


\begin{proof}
We provide a linear time and space reduction from \textsc{GeomBase} to \textsc{Stabbing and Avoiding Segments} as in the reduction by Gajentaan and Overmars~\cite{DBLP:journals/comgeo/GajentaanO95} for a problem they call \textsc{Separator2}.

\begin{figure}[t!]
\centering
\begin{subfigure}{0.45\textwidth}
\centering
\includegraphics[scale=1, page =1]{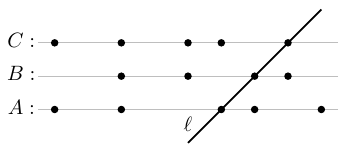}
\centering
\subcaption{}
\label{fig:3sum:1}
\end{subfigure}
\hfill
\begin{subfigure}{0.45\textwidth}
\centering
\includegraphics[scale=1,page=2]{3sum}
\centering
\subcaption{}
\label{fig:3sum:2}
\end{subfigure}
\caption{(a) Instance of \textsc{GeomBase} with  valid solution $\ell$. (b) Reduction to   \textsc{Stabbing and Avoiding Segments} -- the blue segment $s^*$ is the only segment in $\mathcal{S}$ and red segments belong to $\mathcal{A}$.}
\label{fig:3sum}
\end{figure}
Let 	$U$ denote the input of the \textsc{GeomBase} instance; see Figure~\ref{fig:3sum:1}. Assume without loss of generality that the minimum $x$-coordinate in $U$ is $1$ and that the maximum $x$-coordinate in $U$ is $X$. We now construct an instance of \textsc{Stabbing and Avoiding Segments} as follows; see Figure~\ref{fig:3sum:2}.

To construct $\mathcal{A}$, create a rectangle $R$ defined by points $(0,0)$, $(0,2)$, $(X+1,2)$ and $(X+1,0)$. Initially, $\mathcal{A}$ consists of four segments bounding $R$ and the segment between $(0,1)$ and $(X+1,1)$. Now, for each point $p \in U$ remove the part of the segment $a \in \mathcal{A}$ traversing $p$ in a  disk of radius $\varepsilon=\frac{1}{4}$ centered at $p$, subdividing $a$ in the process. Finally, $\mathcal{S}$ consists of the single segment $s^*$ between points $(\varepsilon,\frac{1}{2})$ and $(X+1-\varepsilon,\frac{1}{2})$. This concludes our construction. It now remains to show that the instance $U$ of \textsc{GeomBase} is positive if and only if $\langle \mathcal{S}, \mathcal{A}\rangle$ is a positive instance of  \textsc{Stabbing and Avoiding Segments}.

First, assume that there is a non-horizontal line $\ell$ intersecting three points of $U$.  Clearly, line $\ell$ must traverse the rectangle $R$ from the top edge to the bottom edge. Thus, it also intersects $s^*$. Further, it avoids all segments in $\mathcal{A}$ as we ensured that an $\varepsilon$-disk surrounding each point $p\in U$ is free of segments in $\mathcal{A}$. Thus, line $\ell$ is also a valid solution for our  constructed \textsc{Stabbing and Avoiding Segments} instance. 

Second, assume that there is a line $\ell$ stabbing $s^*$ and avoiding the lines in $\mathcal{A}$. Since $\mathcal{A}$ contains the vertical boundaries of $R$ between points $(0,0)$ and $(0,2)$ as well as between points $(X+1,0)$ and $(X+1,2)$ and $s^*$ is located wholly inside $R$, line $\ell$ must be non-horizontal and intersect the lines $A\colon  y= 0$, $B\colon y=1$ and $C\colon y=2$ in the $x$-range $(0,X+1)$. Intuitively, we can make now use of the  observation that such a line $\ell$ must have passed through three disks with radius $\varepsilon$ and centers on lines $A$, $B$ and $C$. The choice of $\varepsilon = 1/4$ ensures that the three centers must be roughly collinear. Therefore, we obtain a stabbing line corresponding roughly to a solution to \textsc{GeomBase}.

More precisely, recall that by construction, each  segment of $A$, $B$ and $C$ that can be traversed in the $x$-range $(0,X+1)$ is centered on an integer grid point and has length $2\varepsilon$. Hence, line $\ell$ intersects $a$, $b$ and $c$ at $x$-coordinates $(a+\alpha)$, $(b+\beta)$ and $(c+\gamma)$, respectively, where $a,b,c \in \mathbb{N}$ and $\alpha,\beta,\gamma \in (-\varepsilon,\varepsilon)$. Moreover, note, that $(a,0), (b,1), (c,2) \in U$. Since $\ell$ is a straight line, we have:
\begin{equation*} \label{eq1}
\begin{split}
(c+\gamma)-(a+\alpha) &= 2((b+\beta)-(a+\alpha)) \\
(c+\gamma) &=2(b+\beta)-(a+\alpha)\\
2(b+\beta) &=(c+\gamma)+(a+\alpha).
\end{split}
\end{equation*}

As we have chosen $\varepsilon=\frac{1}{4}$, we also have $|\alpha+\gamma|,|2\beta|<\frac{1}{2}$, and thus, we must have $a+c=2b$ and thus $c = 2b-a$. 

Now define $\ell':y=\frac{1}{b-a} \cdot x - \frac{a}{b-a}$. Indeed, $\ell'$ traverses all of $(a,0), (b,1), (c,2) \in U$:
\begin{eqnarray*}
\frac{1}{b-a} \cdot a - \frac{a}{b-a} &=& 0 \quad \checkmark \\
\frac{1}{b-a} \cdot b - \frac{a}{b-a} &=& \frac{b-a}{b-a} = 1 \quad \checkmark\\
\frac{1}{b-a} \cdot c - \frac{a}{b-a} &=& 
\frac{2b-2a}{b-a} = 2 \quad \checkmark
\end{eqnarray*}
Thus, line $\ell'$ is a valid solution for the \textsc{GeomBase} problem. This concludes the proof.
\end{proof}

Due to the equivalence to \textsc{Stabbing and Avoiding Segments}, we obtain:

\begin{corollary}\label{col:3sum}
\textsc{Double-Wedge Intersection} cannot be solved in $\mathcal{O}(n^{2-\varepsilon})$ time for any $\varepsilon>0$ unless Conjecture~\ref{con:3sum} fails.
\end{corollary}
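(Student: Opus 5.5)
The corollary should follow from Theorem~\ref{thm:3sum} by a linear-time reduction from \textsc{Stabbing and Avoiding Segments} to \textsc{Double-Wedge Intersection} that uses the projective duality $T$ from the introduction. Suppose we had an $\mathcal{O}(n^{2-\varepsilon})$-time algorithm for \textsc{Double-Wedge Intersection}. Take an instance $\langle \mathcal{S},\mathcal{A}\rangle$ with $n$ segments, transform it in $\mathcal{O}(n)$ time into an equivalent arrangement of $n$ double-wedges, and run the assumed algorithm. This would solve \textsc{Stabbing and Avoiding Segments} in $\mathcal{O}(n^{2-\varepsilon})$ time, contradicting Theorem~\ref{thm:3sum} unless Conjecture~\ref{con:3sum} fails.

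The construction goes as follows.
\begin{enumerate*}
\item We first rotate the instance so that no segment is vertical and no candidate stabbing line is vertical. This keeps the problem equivalent and matches the general-position assumptions. Concretely, we choose a rotation angle avoiding the $\mathcal{O}(n)$ segment directions.
\item Each $s\in\mathcal{S}$ maps to the bowtie $T(s)$. This is the set of points dual to non-vertical lines meeting $s$, namely the double-wedge between the dual lines of the two endpoints of $s$.
\item Each $a\in\mathcal{A}$ maps to the closure of the complement of $T(a)$. That set is the hourglass dual to the anti-segment formed by the supporting line of $a$ minus $a$. A line avoids $a$ exactly when its dual point lies in this hourglass, up to the boundary conventions discussed in Section~\ref{sec:preliminaries}.
\end{enumerate*}
A point in the common intersection then dualizes to a non-vertical line stabbing all of $\mathcal{S}$ and avoiding all of $\mathcal{A}$, and conversely.

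The main obstacle is not algorithmic but a pair of degeneracies. First, a possible solution line could be vertical and therefore have no dual point. Second, the open/closed status of the endpoints must be handled consistently. For the first issue, I would argue that the solution set of lines is open in the relevant sense, or else observe directly that in the instance built in the proof of Theorem~\ref{thm:3sum} any solution can be perturbed to a non-vertical one; a generic rotation also suffices. For the second issue, I would rely on the remark in Section~\ref{sec:preliminaries} that the results adapt to either convention. Since the reduction is linear in time and space, the corollary follows.
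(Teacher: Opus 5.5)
Your proposal is correct and is essentially the paper's argument: the paper derives the corollary in one line from Theorem~\ref{thm:3sum} via the duality equivalence, and you spell out that linear-time reduction (segments to bowties, avoided segments to closed hourglasses, after a rotation that is genuinely needed because the sides of $R$ are vertical and would dualize to strips) together with its degeneracies. The one place where the endpoint-convention remark you invoke does real work is that closed hourglasses admit lines through endpoints of pieces of $\mathcal{A}$; for the instance of Theorem~\ref{thm:3sum} with $\varepsilon=\frac{1}{4}$ this creates spurious solutions with $a+c=2b\pm 1$ (e.g.\ $U=\{(1,0),(2,1),(2,2)\}$), so you should take $\varepsilon<\frac{1}{4}$ there.
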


Finally, we remark that
    Corollary~\ref{col:3sum} holds even if $\mathcal{D}$ contains only a single bowtie, in contrast to Theorem~\ref{thm:algoParameterized}, which asserts a better run time for the case in which there are few hourglasses.

\section{Open problems}
\label{sec:conclusions}
 
In this paper, we presented efficient and worst-case optimal algorithms for computing the intersection of a double-wedge arrangement, which is the dual formulation of the problem of finding a line stabbing certain prespecified line segments that keeps another set of line segments untouched. 
An intriguing open problem would be to investigate generalizations of double-wedge arrangements in higher dimensions and generalize our results to this setting. 
That is, one can study the complexity of the resulting arrangements and worst-case time-optimal algorithms.

In addition, we proved that there are arrangements of segments and anti-segments, any three of which can be pierced by a line, such that three lines are necessary to pierce all of them. 
With respect to this result, we ask the same question for a family of line segments:
\begin{openquestion}\label{conj:2lines}
    Let $\mathcal{F}$ be a finite family of line segments in the plane such that any three of them can be pierced by a line. Are there always two lines that pierce~$\mathcal{F}$?
\end{openquestion}

We remark that a positive answer to this open question would imply that a family of segments and anti-segments, any three of which can be pierced by a line,  could be pierced by three lines—making Theorem~\ref{thm:3lines_necessary} tight.

\small
\bibliographystyle{abbrv}
\bibliography{references}

\end{document}